\documentclass[12pt]{article} \usepackage{amsmath,amssymb,amsthm,mathtools} \usepackage{aliascnt} \usepackage[margin=1in]{geometry} \usepackage{hyperref} \usepackage[nameinlink,noabbrev]{cleveref} \usepackage{enumitem}

\newcommand{\bbP}{\mathbb{P}} \newcommand{\bbE}{\mathbb{E}} \newcommand{\bbR}{\mathbb{R}} \newcommand{\parent}[1]{\left(#1\right)} \newcommand{\defi}[1]{\emph{#1}} \DeclareMathOperator{\Cov}{Cov} \DeclareMathOperator{\Var}{Var} \DeclareMathOperator{\rank}{rank} \DeclareMathOperator{\pmi}{pmi} \DeclareMathOperator{\logit}{logit} \DeclareMathOperator{\sgn}{sgn} \DeclareMathOperator{\OR}{OR}

\crefname{section}{section}{sections} \Crefname{section}{Section}{Sections} \crefname{subsection}{subsection}{subsections} \Crefname{subsection}{Subsection}{Subsections} \crefname{theorem}{theorem}{theorems} \Crefname{theorem}{Theorem}{Theorems} \crefname{proposition}{proposition}{propositions} \Crefname{proposition}{Proposition}{Propositions} \crefname{lemma}{lemma}{lemmas} \Crefname{lemma}{Lemma}{Lemmas} \crefname{corollary}{corollary}{corollaries} \Crefname{corollary}{Corollary}{Corollaries} \crefname{definition}{definition}{definitions} \Crefname{definition}{Definition}{Definitions} \crefname{example}{example}{examples} \Crefname{example}{Example}{Examples} \crefname{remark}{remark}{remarks} \Crefname{remark}{Remark}{Remarks} \crefname{equation}{equation}{equations} \Crefname{equation}{Equation}{Equations}

\newtheorem{theorem}{Theorem}[section]

\newaliascnt{proposition}{theorem} \newtheorem{proposition}[proposition]{Proposition} \aliascntresetthe{proposition}

\newaliascnt{lemma}{theorem} \newtheorem{lemma}[lemma]{Lemma} \aliascntresetthe{lemma}

\newaliascnt{corollary}{theorem}  \aliascntresetthe{corollary}

\theoremstyle{definition} \newaliascnt{definition}{theorem} \newtheorem{definition}[definition]{Definition} \aliascntresetthe{definition}

\newaliascnt{example}{theorem} \newtheorem{example}[example]{Example} \aliascntresetthe{example}

\theoremstyle{remark} \newaliascnt{remark}{theorem} \newtheorem{remark}[remark]{Remark} \aliascntresetthe{remark}

\title{On the Symmetry of Evidential Support} \author{Grant Molnar} \date{\today}

\begin{document} \maketitle

\begin{abstract}
For events $A$ and $B$, we have
\[
\mathbb{P}(A\mid B) > \mathbb{P}(A\mid \neg B) \qquad\Longleftrightarrow\qquad \mathbb{P}(B\mid A) > \mathbb{P}(B\mid \neg A)
\]
whenever all four quantities are defined. In other words, $B$ is evidence for $A$ if and only if $A$ is evidence for $B$. This note gives seven different proofs of this fact---by cross-multiplication, covariance, coupling parameters, odds ratios, pointwise mutual information, combinatorial double counting, and mixed discrete derivatives---and develops a surrounding web of interpretations. Once the marginals  $\mathbb{P}(A)$ and $\mathbb{P}(B)$ are fixed, a $2\times 2$ table has only one degree of freedom, so every scalar notion of positive association must be governed by the same signed parameter.
\end{abstract}

\tableofcontents

%% ====================================================================
\section{Introduction}\label{sec:intro}
%% ====================================================================

Throughout this note, $A$ and $B$ are events in an ambient probability space with \[ 0 < \bbP(A) < 1 \qquad\text{and}\qquad 0 < \bbP(B) < 1, \] so all four conditional probabilities $\bbP(A\mid B)$, $\bbP(A\mid \neg B)$, $\bbP(B\mid A)$, $\bbP(B\mid \neg A)$ are well-defined.

Write the four joint probabilities as 
\[
p \coloneqq \bbP(A\wedge B),\quad q \coloneqq \bbP(A\wedge \neg B),\quad r \coloneqq \bbP(\neg A\wedge B),\quad s \coloneqq \bbP(\neg A\wedge \neg B),
\]
so $p+q+r+s=1$. The marginals are $\bbP(A)=p+q$ and $\bbP(B)=p+r$. Define $\Delta \coloneqq ps - qr$.

The conceptual core of the note is that binary association is one-dimensional. Once $\bbP(A)$ and $\bbP(B)$ are fixed, a $2\times 2$ table has only one free parameter, so every scalar measure of positive versus negative association must change sign at the same threshold.

\begin{theorem}[Symmetry of evidential support]\label{thmmain} The following are equivalent. \begin{enumerate}[label=(\roman*)] \item $\bbP(A\mid B) > \bbP(A\mid \neg B)$. \item $\bbP(B\mid A) > \bbP(B\mid \neg A)$. \item $\Delta > 0$. \end{enumerate} Replacing every ``$>$'' with ``$=$'' or with ``$<$'' gives two further equivalences, so the three possible signs of $\Delta$ classify the three possible qualitative regimes of binary association. \end{theorem}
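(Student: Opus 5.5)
The plan is to compute both differences of conditional probabilities in terms of $p,q,r,s$ and show that each equals $\Delta$ times a strictly positive factor. Since $\bbP(B)=p+r$ and $\bbP(\neg B)=q+s$, we have $\bbP(A\mid B)=p/(p+r)$ and $\bbP(A\mid\neg B)=q/(q+s)$. Putting these over a common denominator gives
\[
\bbP(A\mid B)-\bbP(A\mid\neg B)=\frac{p(q+s)-q(p+r)}{(p+r)(q+s)}=\frac{ps-qr}{\bbP(B)\,\bbP(\neg B)}=\frac{\Delta}{\bbP(B)\,\bbP(\neg B)}.
\]
Exchanging the roles of $A$ and $B$ interchanges $q$ and $r$ and fixes $p$ and $s$. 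Hence
\[
\bbP(B\mid A)-\bbP(B\mid\neg A)=\frac{p(r+s)-r(p+q)}{(p+q)(r+s)}=\frac{\Delta}{\bbP(A)\,\bbP(\neg A)}.
\]

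The standing hypothesis $0<\bbP(A),\bbP(B)<1$ makes both denominators strictly positive. Each difference therefore has exactly the same sign as $\Delta$, and this one observation gives all three equivalences at once. Statement (i) holds iff $\Delta>0$ iff statement (ii) holds. The same argument with ``$=$'' or ``$<$'' in place of ``$>$'' yields the two further equivalences. Because $\Delta$ is a real number, exactly one of $\Delta>0$, $\Delta=0$, $\Delta<0$ holds, so the three regimes are exhaustive and mutually exclusive.

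There is no real obstacle here. The only step needing care is the numerator simplification $p(q+s)-q(p+r)=ps-qr$, where the $pq$ terms cancel. One must also note explicitly that positivity of the denominators is exactly where the hypothesis on the marginals is used.
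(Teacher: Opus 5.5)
Your proposal is correct and is essentially the paper's algebraic proof. The paper cross-multiplies the two fraction inequalities to reach $ps>qr$, and it records exactly your two identities as \eqref{eqdetrow} and \eqref{eqdetcol}, noting that the second is the first applied to $M^\top$ (your $q\leftrightarrow r$ swap); writing each difference as $\Delta$ times a positive factor, as you do, settles the ``$=$'' and ``$<$'' cases at the same stroke rather than by analogy.
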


\Cref{thmmain} is classical, and several of the proofs presented here are standard or folkloric; the contribution of the note is to gather them together and organize them around the observation that, once the marginals are fixed, binary association is one-dimensional.

\begin{remark}[Two binary encodings] We will use two standard numerical encodings of the events $A$ and $B$. The default is the \defi{indicator encoding} \[ X \coloneqq \mathbf{1}_A,\qquad Y \coloneqq \mathbf{1}_B, \] so $X,Y\in\{0,1\}$. This is the natural choice for conditional probabilities, covariances of indicators, and regression formulas. In the Walsh--Fourier and Ising-model viewpoints it is more convenient to use the \defi{spin encoding} \[ U \coloneqq 2X-1,\qquad V \coloneqq 2Y-1, \] so $U,V\in\{-1,+1\}$. We will always signal explicitly when we pass from one encoding to the other. \end{remark}

Before we prove \Cref{thmmain}, we give a few examples of how it matters.

\begin{example}[Medical testing] Let $A$ be the event that a patient has a disease and $B$ the event that a screening test is positive. If the test is informative---meaning $\bbP(B\mid A) > \bbP(B\mid \neg A)$, i.e.\ the true-positive rate exceeds the false-positive rate---then \Cref{thmmain} guarantees that a positive result raises the probability of disease relative to a negative result. \end{example}

\begin{example}[Correlation and prediction] Suppose $A$ is the event that a student scores above the median on a mathematics exam and $B$ is the event that the same student scores above the median on a physics exam. If $A$ makes $B$ more likely, then $B$ also makes $A$ more likely. The two predictions ``math predicts physics'' and ``physics predicts math'' stand or fall together. \end{example}

\begin{example}[Paired performance tests] Let $A$ be the event that a manufactured component passes a thermal-stress test and $B$ the event that it passes a vibration test. If passing the thermal test makes passing the vibration test more likely, then passing the vibration test likewise makes passing the thermal test more likely. The same association governs both predictive directions. \end{example}

\medskip

The remainder of the note presents seven proofs organized by viewpoint. \Cref{sec:alg} begins with the direct algebraic reduction to the scalar invariant $\Delta=ps-qr$. \Cref{sec:cov} recasts the sign condition in terms of covariance and, more conceptually, in the language of Hilbert spaces. \Cref{sec:coupling} then supplies a global structural explanation: once the marginals are fixed, there is only one interaction coordinate. \Cref{sec:odds} interprets the phenomenon statistically through odds ratios and Bayes factors, while \Cref{sec:pmi} gives an information-theoretic formulation in terms of pointwise mutual information and surprisal reduction. Finally, \Cref{sec:comb} recasts the same sign condition combinatorially, and \Cref{sec:logpot} interprets it using discrete potential theory. \Cref{sec:conclusion} organizes the previous results and emphasizes why, in the $2\times 2$ case, so many scalar notions of association collapse to the same sign.

%% ====================================================================
\section{The Algebraic Proof}\label{sec:alg}
%% ====================================================================

\subsection{Perspective}

This section records our most direct proof of \Cref{thmmain}. Both conditional-probability comparisons reduce by elementary algebra to the single scalar inequality $ps-qr>0$, so the determinant-like quantity $\Delta=ps-qr$ emerges immediately as the basic invariant of the $2\times 2$ table.

\subsection{Proof}

\begin{proof}[Proof of \Cref{thmmain}, algebraic version] We show that both (i) and (ii) are equivalent to (iii).

For (i)$\,\Leftrightarrow\,$(iii), write \[ \bbP(A\mid B) > \bbP(A\mid \neg B) \iff \frac{p}{p+r} > \frac{q}{q+s}. \] Both denominators are positive, so cross-multiplying preserves the inequality, yielding
\[ p(q+s) > q(p+r) \iff ps > qr \iff \Delta > 0. \]

For (ii)$\,\Leftrightarrow\,$(iii), write \[ \bbP(B\mid A) > \bbP(B\mid \neg A) \iff \frac{p}{p+q} > \frac{r}{r+s}. \] Cross-multiplying gives \[ p(r+s) > r(p+q) \iff ps > qr \iff \Delta > 0. \] Our claim follows. \end{proof}

\subsection{The determinant and rank one}

Arrange the joint probabilities in the matrix \[ M \coloneqq \begin{pmatrix} p & q \\ r & s \end{pmatrix}. \] Then $\det M = ps - qr = \Delta$. The algebraic proof can now be reread as a statement about $\det M$.

\begin{proposition}\label{proprank} $A$ and $B$ are independent if and only if $M$ has rank~$1$. \end{proposition}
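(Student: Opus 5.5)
The plan is to combine two facts. Independence of $A$ and $B$ is equivalent to $\Delta=0$. A nonzero $2\times 2$ matrix has rank~$1$ exactly when its determinant vanishes.

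First, observe that $M\neq 0$, since its entries are nonnegative and sum to $1$. Hence $\rank M\in\{1,2\}$, and $\rank M=1$ if and only if $\det M=\Delta=0$.

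Next, show that independence is equivalent to $\Delta=0$. Independence of the events $A,B$ means $\bbP(A\wedge B)=\bbP(A)\bbP(B)$, that is, $p=(p+q)(p+r)$. Using $p+q+r+s=1$, write
\[
p-(p+q)(p+r)=p(p+q+r+s)-(p+q)(p+r)=ps-qr=\Delta .
\]
So independence is exactly $\Delta=0$. Alternatively, one can cite the equality case of \Cref{thmmain}: $\Delta=0$ iff $\bbP(A\mid B)=\bbP(A\mid\neg B)$, which is the usual characterization of independence when $0<\bbP(B)<1$. I prefer the direct computation, since it is self-contained.

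It is also worth giving a more conceptual account of the rank-one direction. Suppose $A$ and $B$ are independent. Then all four cells factor, so $M$ is the outer product $(\bbP(A),\bbP(\neg A))^{T}(\bbP(B),\bbP(\neg B))$, which visibly has rank~$1$. Note that independence of $A,B$ implies independence of the complements.

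Conversely, suppose $M$ has rank~$1$. Then $M=uv^{T}$ for some vectors $u,v$. Summing rows and columns identifies $u$ and $v$, up to a scalar, with the marginal vectors, and this recovers $p=\bbP(A)\bbP(B)$.

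There is no real obstacle here. The only point needing care is that $M\neq 0$, so that "determinant zero" means rank exactly $1$ rather than rank $0$.
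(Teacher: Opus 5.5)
Your proof is correct, but your main argument takes a different route from the paper's. You reduce both sides to the single scalar condition $\Delta=0$. Since $M\neq 0$, $\rank M=1$ iff $\det M=0$. Independence is $p-(p+q)(p+r)=0$, which your identity turns into $ps-qr=0$; this is exactly the computation the paper later records as $\Cov(X,Y)=\Delta$ in \Cref{propcoveq}.

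The paper argues structurally instead and brings in the determinant only at the end. Independence makes $M$ the outer product of the marginal vectors, hence rank $1$. Conversely, rank $1$ forces $(r,s)=\lambda(p,q)$; this is legitimate because the first row is nonzero, as $\bbP(A)>0$. Then $1=(1+\lambda)(p+q)$ yields $\bbP(A)\bbP(B)=(p+q)(1+\lambda)p=p$.

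Your route is shorter and ties the proposition immediately to $\Delta$, which is what the determinant identities that follow actually use. It also needs no positivity hypotheses, since your identity holds whenever $p+q+r+s=1$. The paper's route gives the sharper statement that independence means $M$ \emph{equals} the outer product of its own margins. That argument also transfers verbatim to $m\times n$ tables, where no single determinant is available.

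Your supplementary ``conceptual account'' is essentially the paper's proof. The forward direction is identical. Your converse via $M=uv^{\top}$ with row and column sums is a slightly cleaner variant of the row-proportionality step, because it never has to single out a nonzero row: the two normalizing scalars multiply to the total mass $1$, so $M$ is the product of its row-sum and column-sum vectors.
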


\begin{proof} If $A$ and $B$ are independent, then $M$ factors as an outer product of marginal vectors, \[ M = \begin{pmatrix} \bbP(A) \\ \bbP(\neg A) \end{pmatrix} \begin{pmatrix} \bbP(B) & \bbP(\neg B) \end{pmatrix}. \] Any outer product has rank at most~$1$, and since $M\neq 0$ the rank is exactly~$1$.

Conversely, if $\rank\parent{M}=1$, then the second row is a scalar multiple of the first: $(r,s)=\lambda\,\parent{p,q}$ for some $\lambda \ge 0$. Hence \[ \bbP(B)=p+r=(1+\lambda)p, \qquad \bbP(A)=p+q. \] Also \[ 1=p+q+r+s=(1+\lambda)(p+q), \] so $(1+\lambda)(p+q)=1$. Therefore \[ \bbP(A)\bbP(B)=(p+q)(1+\lambda)p=p=\bbP(A\wedge B). \] Thus $A$ and $B$ are independent. Since $M$ is $2\times 2$ with nonnegative entries summing to~$1$, its rank is $1$ or $2$, so rank~$1$ is equivalent to $\det M = 0$. \end{proof}

Consider the conditional-probability differences
\begin{align}
\bbP(A\mid B) - \bbP(A\mid \neg B) &= \frac{\det M}{(p+r)(q+s)}, \label{eqdetrow} \\
[4pt] \bbP(B\mid A) - \bbP(B\mid \neg A) &= \frac{\det M}{(p+q)(r+s)}, \label{eqdetcol} 
\end{align}
with strictly positive denominators and the same numerator. This is because \eqref{eqdetcol} is \eqref{eqdetrow} applied to $M^\top$, and $\det M^\top = \det M$.

\begin{remark}[Algebraic statistics] In the language of algebraic statistics \cite{DrtonSturmfelsSullivant}, the independence model for a $2\times 2$ contingency table is the variety $ps - qr = 0$; this is the simplest instance of the Segre variety. For an $m\times n$ table, independence is equivalent to the vanishing of all $2\times 2$ minors. The binary case is special because there is only one minor. \end{remark}

\subsection{Signed area and geometry}

The two rows of $M$ are vectors in $\bbR^2$, \[ \mathbf{r}_A \coloneqq (p,q), \qquad \mathbf{r}_{\neg A} \coloneqq (r,s), \] and $\Delta$ is the signed area of the parallelogram they span. When $\Delta = 0$ the rows are proportional---the conditional distribution of $B$ given $A$ equals that given $\neg A$---which is independence. When $\Delta > 0$, the vector $\mathbf{r}_A$ is ``tilted toward the first coordinate'' relative to $\mathbf{r}_{\neg A}$, meaning $A$ is more concentrated on $B$ than $\neg A$ is. Transposing $M$ preserves the determinant, so the same tilt governs the column comparison. In exterior algebra, $\Delta = \mathbf{r}_A \wedge \mathbf{r}_{\neg A}$, and \Cref{thmmain} follows because $\det M^\top = \det M$.

%% ====================================================================
\section{The Covariance Proof}\label{sec:cov}
%% ====================================================================

\subsection{Perspective}

This section recasts \Cref{thmmain} in the language of Hilbert spaces. The key point is that both conditional directions are controlled by the sign of the single bilinear quantity $\Cov(\mathbf{1}_A,\mathbf{1}_B)$, or equivalently by the associated one-dimensional correlation parameter.

\subsection{Proof}

Let $X = \mathbf{1}_A$ and $Y = \mathbf{1}_B$ be as usual.

\begin{proposition}\label{propcoveq} We have the equality $\Cov\parent{X,Y} = \Delta$. \end{proposition}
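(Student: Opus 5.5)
The plan is a direct computation from the definition $\Cov(X,Y)=\bbE[XY]-\bbE[X]\,\bbE[Y]$. Since $X=\mathbf{1}_A$ and $Y=\mathbf{1}_B$ are indicators, I first record the three expectations in terms of the joint probabilities $p,q,r,s$:
\[
\bbE[XY]=\bbP(A\wedge B)=p,\qquad \bbE[X]=\bbP(A)=p+q,\qquad \bbE[Y]=\bbP(B)=p+r.
\]
These follow because $XY=\mathbf{1}_{A\wedge B}$ and the expectation of an indicator is the probability of its event.

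Next I substitute to get $\Cov(X,Y)=p-(p+q)(p+r)$. The only step that needs any care, though it is routine, is to make the normalization $p+q+r+s=1$ visible. I would write $p=p(p+q+r+s)$ and then subtract. The expansion of $(p+q)(p+r)$ is $p^2+pr+pq+qr$, so
\[
p(p+q+r+s)-(p^2+pq+pr+qr)=ps-qr=\Delta.
\]
This gives the claimed identity $\Cov(X,Y)=\Delta$.

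I do not expect a genuine obstacle; the argument is a two-line verification. As an optional cross-check I might note the equivalent form $\Cov(X,Y)=\bbP(A)\bbP(\neg A)\bigl(\bbP(B\mid A)-\bbP(B\mid\neg A)\bigr)$. Combined with \eqref{eqdetcol}, this again yields $\det M$, which ties the covariance viewpoint to the algebraic proof in \Cref{sec:alg}.
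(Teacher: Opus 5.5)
Your proposal is correct and follows the paper's proof essentially verbatim: both compute $\bbE[XY]-\bbE[X]\,\bbE[Y]=p-(p+q)(p+r)$ and use $p+q+r+s=1$ to reduce it to $ps-qr$. Your optional cross-check through $\bbP(A)\bbP(\neg A)\bigl(\bbP(B\mid A)-\bbP(B\mid\neg A)\bigr)$ and \eqref{eqdetcol} is also valid, though not needed.
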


\begin{proof} We compute \[ \Cov\parent{X,Y} = \bbE[XY] - \bbE[X]\,\bbE[Y] = p - (p+q)(p+r) = p(1-p-q-r) - qr = ps - qr. \] \end{proof}

\begin{proposition}\label{propcovcond} For any events $A$ and $B$ with $0 < \bbP(B) < 1$, \[ \bbP(A\mid B) - \bbP(A) = \frac{\Cov\parent{X,Y}}{\bbP(B)}. \] \end{proposition}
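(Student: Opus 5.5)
The plan is to compute both sides directly from the definitions of covariance and conditional probability. Since $X=\mathbf{1}_A$ and $Y=\mathbf{1}_B$ are indicators, $XY=\mathbf{1}_{A\wedge B}$. Hence $\bbE[XY]=\bbP(A\wedge B)$, $\bbE[X]=\bbP(A)$ and $\bbE[Y]=\bbP(B)$. This gives
\[
\Cov\parent{X,Y} = \bbP(A\wedge B) - \bbP(A)\,\bbP(B).
\]
This is the same computation as in \Cref{propcoveq}, stopped before rewriting it as $\Delta$.

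Next, because $\bbP(B)>0$, the conditional probability $\bbP(A\mid B)=\bbP(A\wedge B)/\bbP(B)$ is defined. Dividing the covariance identity by $\bbP(B)$ gives
\[
\frac{\Cov\parent{X,Y}}{\bbP(B)} = \frac{\bbP(A\wedge B)}{\bbP(B)} - \bbP(A) = \bbP(A\mid B) - \bbP(A),
\]
which is the claim. The hypothesis $\bbP(B)<1$ is not needed for this identity. I would remark that it is included so that the companion identity
\[
\bbP(A\mid \neg B) - \bbP(A) = -\frac{\Cov\parent{X,Y}}{\bbP(\neg B)}
\]
is also available. That companion follows by the same computation, or from the present result with $B$ replaced by $\neg B$, using $\Cov(X,1-Y)=-\Cov(X,Y)$.

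There is no real obstacle; the only point needing care is that $\bbP(B)\neq 0$ justifies the division. Combining the two displays gives
\[
\bbP(A\mid B)-\bbP(A\mid\neg B)=\Cov\parent{X,Y}\left(\frac{1}{\bbP(B)}+\frac{1}{\bbP(\neg B)}\right),
\]
and the same holds with the roles of $A$ and $B$ exchanged. Since covariance is symmetric, this is presumably how the section will derive \Cref{thmmain}.
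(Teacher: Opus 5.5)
Your proposal is correct and matches the paper's proof. Both write $\bbP(A\mid B)-\bbP(A)=\bigl(\bbP(A\wedge B)-\bbP(A)\,\bbP(B)\bigr)/\bbP(B)$ and identify the numerator as $\Cov(X,Y)$; the only difference is that you start from the covariance side. Your observations are also accurate: only $\bbP(B)>0$ is needed, and the $\neg B$ companion identity follows via $\Cov(X,1-Y)=-\Cov(X,Y)$, exactly as the paper then uses it to prove \Cref{thmmain}.
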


\begin{proof} We compute \[ \bbP(A\mid B) - \bbP(A) = \frac{\bbP(A\wedge B)}{\bbP(B)} - \bbP(A) = \frac{\bbP(A\wedge B) - \bbP(A)\,\bbP(B)}{\bbP(B)} = \frac{\Cov\parent{X,Y}}{\bbP(B)}. \] \end{proof}

\begin{proof}[Proof of \Cref{thmmain}, covariance version] By \Cref{propcovcond}, $\bbP(A\mid B) > \bbP(A)$ if and only if $\Cov\parent{X,Y} > 0$. Applying the same identity with $\neg B$ in place of $B$ and noting $\Cov(X, \mathbf{1}_{\neg B}) = -\Cov\parent{X,Y}$ gives $\bbP(A\mid \neg B) < \bbP(A)$ if and only if $\Cov\parent{X,Y} > 0$. Combining these identities, we conclude \[ \bbP(A\mid B) > \bbP(A\mid \neg B) \iff \Cov\parent{X,Y} > 0. \] Since $\Cov\parent{X,Y} = \Cov\parent{Y,X}$, the same equivalence holds with $A$ and $B$ swapped. \end{proof}

A useful identity that emerges is \[ \bbP(A\mid B) - \bbP(A\mid \neg B) = \Cov\parent{X,Y} \left(\frac{1}{\bbP(B)} + \frac{1}{\bbP(\neg B)}\right). \]

\subsection{Regression slope}

The slope of the least-squares regression of $X$ on $Y$ is $\beta_{X\mid Y} \coloneqq \Cov\parent{X,Y}/\Var\parent{Y}$, whose sign is the sign of $\Cov\parent{X,Y} = \Delta$. The regression of $Y$ on $X$ has slope $\beta_{Y\mid X} \coloneqq \Cov\parent{X,Y}/\Var\parent{X}$, which shares the same sign. So ``the conditional mean of $\mathbf{1}_A$ is increasing in $\mathbf{1}_B$'' and vice versa are the same assertion.

\subsection{Walsh--Fourier expansion}

Encode the events as $\pm 1$ variables $U = 2X-1$ and $V = 2Y-1$. Let
\[
f(u,v)\coloneqq \bbP(U=u,V=v)
\]
be the joint probability mass function on $\{-1,+1\}^2$. With respect to the Walsh basis
\[
1,\quad u,\quad v,\quad uv
\]
for real-valued functions on $\{-1,+1\}^2$, every such $f$ has a unique expansion
\[
f(u,v)=\hat f_{\varnothing}+\hat f_U\,u+\hat f_V\,v+\hat f_{UV}\,uv.
\]
We note $\hat f_{UV}=\frac14\bbE[UV]$. Writing $\alpha = \bbP(A)$ and $\beta = \bbP(B)$, we obtain
\[
\bbE[UV]
=
4p-2\alpha-2\beta+1.
=
(2\alpha-1)(2\beta-1)+4\Delta,
\]
which is an affine function of $\Cov(X, Y) = \Delta$ with marginals held fixed. 

After the marginal terms are separated off, there remains exactly one second-order interaction coefficient, and its sign is the sign of $\Delta$. This perspective is natural in the analysis of Boolean functions and on the discrete cube \cite[Ch.~1]{ODonnell}.{ODonnell}.

\subsection{Hilbert-space and canonical-correlation perspective}

The covariance proof can be deepened into a statement about conditional-expectation operators on $L^2$.

Define the centered, standardized indicators \[ u_A \coloneqq \frac{\mathbf{1}_A - \bbP(A)}{\sqrt{\bbP(A)\,\bbP(\neg A)}}, \qquad u_B \coloneqq \frac{\mathbf{1}_B - \bbP(B)}{\sqrt{\bbP(B)\,\bbP(\neg B)}}. \] Each has mean zero and variance one. The space of $B$-measurable mean-zero random variables is one-dimensional, spanned by $u_B$. Therefore $\bbE[u_A \mid B]$, which is $B$-measurable and mean-zero, must be a scalar multiple of $u_B$. Write \[ \bbE[u_A \mid B] = \rho\, u_B. \] Taking the inner product with $u_B$ gives $\rho = \bbE[u_A\, u_B]$, the canonical correlation coefficient. By exactly the same argument with $A$ and $B$ swapped, we have \[ \bbE[u_B \mid A] = \rho\, u_A. \]

Now $u_B$ is positive on $B$ and negative on $\neg B$. Hence $\rho > 0$ if and only if $\bbE[u_A \mid B] > \bbE[u_A \mid \neg B]$. Since $u_A$ is a positive affine function of $\mathbf{1}_A$, this is equivalent to $\bbP(A\mid B) > \bbP(A\mid \neg B)$. By the symmetric identity $\bbE[u_B \mid A] = \rho\, u_A$, the same condition $\rho > 0$ is equivalent to $\bbP(B\mid A) > \bbP(B\mid \neg A)$.

Thus, the two conditional-expectation operators are adjoints between two one-dimensional mean-zero Hilbert spaces, so there is a single signed scalar---the canonical correlation $\rho$---governing both directions. In the binary case $\rho$ is just the Pearson correlation \[ \phi = \frac{\Delta}{\sqrt{\bbP(A)\,\bbP(\neg A)\,\bbP(B)\,\bbP(\neg B)}}, \] but the Hilbert-space formulation generalizes to arbitrary finite partitions (where the canonical correlations form a vector, and the simple symmetry can fail).

\begin{remark}[Orthogonal projection] The identity $\bbE[u_A \mid B] = \rho\, u_B$ can also be derived by projecting the centered indicator $\mathbf{1}_A - \bbP(A)$ onto the two-dimensional subspace of $L^2(\Omega)$ spanned by $1$ and $\mathbf{1}_B$. The projection coefficient is \[ \frac{\langle \mathbf{1}_A - \bbP(A),\, \mathbf{1}_B - \bbP(B)\rangle}{ \|\mathbf{1}_B - \bbP(B)\|^2} = \frac{\Cov\parent{X,Y}}{\Var\parent{Y}}, \] which is the regression slope $\beta_{X\mid Y}$. Swapping $A$ and $B$ swaps the two centered indicators inside the same symmetric inner product $\langle\cdot,\cdot\rangle$, so the sign is preserved. \end{remark}

%% ====================================================================
\section{The Coupling-Parameter Proof}\label{sec:coupling}
%% ====================================================================

\subsection{Perspective}

This section gives a structural explanation for \Cref{thmmain}. Once the marginals are fixed, a $2\times 2$ table has only one degree of freedom, so both conditional-probability differences must be governed by the same interaction coordinate.

\subsection{Proof}

Let $\alpha \coloneqq \bbP(A)$ and $\beta \coloneqq \bbP(B)$. Every $2\times 2$ law with these marginals is determined by $p = \bbP(A\wedge B)$; the remaining cells are then forced by the marginal constraints, \[ q = \alpha - p,\qquad r = \beta - p,\qquad s = 1 - \alpha - \beta + p. \] Define the \defi{coupling parameter} \[ t \coloneqq p - \alpha\beta, \] which measures the excess of $\bbP(A\wedge B)$ over its independence value $\alpha\beta$. Then $t = 0$ at independence, $t > 0$ for positive association, and $t < 0$ for negative association.

\begin{proof}[Proof of \Cref{thmmain}, coupling-parameter version] Express the two conditional-probability differences in terms of $t$. Since $p + r = \beta$ and $q + s = 1 - \beta$, we have
\[
\bbP(A\mid B) - \bbP(A\mid \neg B) = \frac{p}{\beta} - \frac{q}{1-\beta} = \frac{p(1-\beta) - q\beta}{\beta(1-\beta)}.
\]
Substituting $p = \alpha\beta + t$ and $q = \alpha(1-\beta) - t$ into the numerator gives \[ (\alpha\beta + t)(1-\beta) - (\alpha(1-\beta) - t)\beta = t(1-\beta) + t\beta = t. \] Thus, we have \[ \bbP(A\mid B) - \bbP(A\mid \neg B) = \frac{t}{\beta(1-\beta)}. \] Likewise, since $p + q = \alpha$ and $r + s = 1 - \alpha$, we have \[ \bbP(B\mid A) - \bbP(B\mid \neg A) = \frac{p}{\alpha} - \frac{r}{1-\alpha} = \frac{t}{\alpha(1-\alpha)}. \] Both are quotients of $t$ by a strictly positive denominator, so both have the sign of $t$. \end{proof}

This proof presents \Cref{thmmain} as a statement about the geometry of the $2 \times 2$ contingency table---once marginals are fixed, there is only one degree of freedom, and every directional notion of positive association must flip sign at the independence point \cite{FienbergGilbert1970}. From the discrete-copula viewpoint, the same family may be regarded as the bivariate Bernoulli family with fixed margins and one remaining dependence coordinate \cite{Geenens2020}.

\subsection{Normalized interaction parameter}

We obtain a convenient normalized interaction coordinate by writing
\[
\bbP(A\wedge B)=\alpha\beta+\theta\,\alpha(1-\alpha)\beta(1-\beta)
\]
for a scalar $\theta$. The remaining cells are then forced by the marginal constraints. A direct calculation gives
\[
\bbP(A\mid B)-\bbP(A\mid \neg B)=\theta\,\alpha(1-\alpha),
\qquad
\bbP(B\mid A)-\bbP(B\mid \neg A)=\theta\,\beta(1-\beta).
\]
Hence both conditional-probability differences have the same sign as $\theta$. In the bivariate Bernoulli setting this is a convenient one-parameter reparameterization of the dependence structure; for fixed marginals it differs from the Pearson $\phi$-coefficient only by the positive scale factor $\sqrt{\alpha(1-\alpha)\beta(1-\beta)}$. Indeed, we have
\[
\phi
=
\frac{\Delta}{\sqrt{\alpha(1-\alpha)\beta(1-\beta)}}
=
\theta\,\sqrt{\alpha(1-\alpha)\beta(1-\beta)}
\]
\cite{Geenens2020,Warrens2008}.

\subsection{The differential viewpoint}

The coupling-parameter proof has a natural differential refinement. For each admissible value of $t$, let $\bbP_t$ denote the unique joint law of $(A,B)$ with fixed marginals $\bbP_t(A)=\alpha$ and $\bbP_t(B)=\beta$ and with coupling parameter $t$, i.e. \[ \bbP_t(A\wedge B)=\alpha\beta+t,\qquad \bbP_t(A\wedge \neg B)=\alpha(1-\beta)-t, \] \[ \bbP_t(\neg A\wedge B)=\beta(1-\alpha)-t,\qquad \bbP_t(\neg A\wedge \neg B)=(1-\alpha)(1-\beta)+t. \] Here ``admissible'' means precisely that these four quantities are nonnegative, i.e. that \[ -\min\{\alpha\beta,(1-\alpha)(1-\beta)\} \le t \le \min\{\alpha(1-\beta),(1-\alpha)\beta\}. \] Now define \begin{align*} F(t) &\coloneqq \bbP_t(A\mid B) - \bbP_t(A\mid \neg B) = \frac{t}{\beta(1-\beta)}, \\ G(t) &\coloneqq \bbP_t(B\mid A) - \bbP_t(B\mid \neg A) = \frac{t}{\alpha(1-\alpha)}. \end{align*} Then $F$ and $G$ are both linear in $t$ with strictly positive slopes \[ F'(t) = \frac{1}{\beta(1-\beta)} > 0, \qquad G'(t) = \frac{1}{\alpha(1-\alpha)} > 0, \] and $F(0) = G(0) = 0$. As $F$ and $G$ are strictly increasing through the origin, we have $\sgn F\parent{t} = \sgn t = \sgn G\parent{t}$, and \Cref{thmmain} follows.

\subsection{Concordance and discordance}

Take two independent draws $(X,Y)$ and $(X',Y')$ from the joint law, with $X = \mathbf{1}_A$, $Y = \mathbf{1}_B$. Define a draw pair to be \defi{concordant} if one sample is $(1,1)$ and the other is $(0,0)$, and \defi{discordant} if one is $(1,0)$ and the other is $(0,1)$. Then \[ \bbP(\text{concordant}) = 2ps, \qquad \bbP(\text{discordant}) = 2qr. \] So $\Delta > 0$ if and only if concordant pairs are more likely than discordant pairs. This condition is visibly invariant under swapping the two coordinates, providing yet another proof of \Cref{thmmain}. $B$ is evidence for $A$ if and only if, in two independent draws, aligned outcomes are more probable than crossed outcomes. This is reminiscent of Kendall's $\tau$ \cite{Kendall1938}.

The connection to the coupling parameter is immediate, since \[ \bbE[(X-X')(Y-Y')] = 2\Cov\parent{X,Y} = 2\Delta, \] and the product $(X-X')(Y-Y')$ equals $+1$ on concordant pairs, $-1$ on discordant pairs, and $0$ otherwise.

\subsection{Transport and mismatch cost}

For Bernoulli variables $X = \mathbf{1}_A$ and $Y = \mathbf{1}_B$ with fixed marginals $\alpha$ and $\beta$, the Hamming mismatch probability is 
\begin{equation}\label{eqnXneqY}
    \bbP(X \neq Y) = q + r.
\end{equation}
Under independence, \eqref{eqnXneqY} equals $\alpha(1-\beta) + (1-\alpha)\beta$. Positive association ($\Delta > 0$) is equivalent to the actual coupling having less Hamming mismatch than the independence coupling, i.e., \[ q + r < \alpha(1-\beta) + (1-\alpha)\beta. \] Expanding, this reduces to $p > \alpha\beta$, i.e.\ $t > 0$. So positive evidence is the same as saying the joint law is a better-than-independent coupling in the sense of discrete optimal transport.

%% ====================================================================
\section{The Odds-Ratio Proof}\label{sec:odds}
%% ====================================================================

\subsection{Perspective}

This section interprets \Cref{thmmain} on the odds scale used in statistics and Bayesian inference. The key invariant here is the odds ratio $ps/qr$, whose sign relative to $1$ governs both conditional comparisons symmetrically.

\subsection{Proof}

\begin{proof}[Proof of \Cref{thmmain}, odds-ratio version] The function $\pi \mapsto \pi/(1-\pi)$ on the unit interval $(0, 1)$ is strictly increasing. Thus, $\bbP(A\mid B) > \bbP(A\mid \neg B)$ if and only if the odds of $A$ given $B$ exceed the odds given $\neg B$. The odds of $A$ given $B$ are \[ \frac{\bbP(A\mid B)}{\bbP(\neg A\mid B)} = \frac{p}{r}, \] and the odds given $\neg B$ are $q/s$. The inequality becomes \[ \frac{p}{r} > \frac{q}{s} \iff ps > qr \iff \Delta > 0. \] Likewise, $\bbP(B\mid A) > \bbP(B\mid \neg A)$ translates to $p/q > r/s$, which is again $ps > qr$. Both are equivalent to the odds ratio $\OR \coloneqq ps/qr$ exceeding~$1$ \cite{Agresti}. \end{proof}

\begin{remark}[Historical note] The odds ratio as a measure of association goes back at least to Yule \cite{Yule1900}. In the present section, however, \cite{Agresti} is the more direct technical reference for the equivalence between positive association in a $2\times 2$ table, odds ratios, and logistic coefficients. \end{remark}

\subsection{Bayesian reciprocity of evidence}

Bayes' rule in odds form \cite[Ch.~1]{Good} states that after observing $B$, the posterior odds of $A$ relate to the prior odds by \[ \underbrace{\frac{\bbP(A\mid B)}{\bbP(\neg A\mid B)}}_{\text{posterior odds}} = \underbrace{\frac{\bbP(A)}{\bbP(\neg A)}}_{\text{prior odds}} \;\cdot\; \underbrace{\frac{\bbP(B\mid A)}{\bbP(B\mid \neg A)}}_{\text{Bayes factor}}. \] The Bayes factor exceeds~$1$ if and only if $\bbP(B\mid A) > \bbP(B\mid \neg A)$, meaning observing $B$ raises the odds of $A$. This is equivalent to $\OR > 1$, which is symmetric in $A$ and $B$. The Bayesian reading of \Cref{thmmain} is that evidential support is reciprocal---$B$ is evidence for $A$ if and only if $A$ is evidence for $B$ \cite{EellsFitelson,CrupiTentoriGonzalez}.

\begin{remark} This reciprocity holds for binary partitions. Once one passes to finer partitions or conditional stratifications, related reversals can occur; compare Simpson's paradox \cite{Blyth}. \end{remark}

\subsection{The Ising interaction and exponential families}

The log-odds ratio $\log(ps/qr)$ is the natural interaction parameter of the $2\times 2$ table viewed as a log-linear or exponential-family model. Write the joint mass function in Gibbs form as \[ \bbP(U=u,\, V=v) = \frac{1}{Z}\,\exp\!\bigl(\alpha\, u + \beta\, v + \gamma\, uv\bigr), \qquad u,v\in\{-1,+1\}, \] where $U,V$ are the $\pm 1$ encodings of $A,B$. A direct calculation shows that \[ \gamma = \tfrac14 \log(ps/qr). \] Thus the sign of the unique interaction parameter $\gamma$ is the sign of the log-odds ratio, hence also the sign of $\Delta$. In the language of statistical mechanics, $\gamma>0$ is \textbf{ferromagnetic} (aligned configurations favored), $\gamma<0$ is \textbf{antiferromagnetic}, and $\gamma=0$ is independence. See \cite{Agresti} for the concrete $2\times 2$ identification of the interaction with the log-odds ratio; for the broader log-linear and graphical-model perspective, see also \cite{Lauritzen}.

\subsection{Monotone likelihood ratio}

For the binary observation $Y\in\{0,1\}$, define the likelihood ratio \[ \Lambda(v) \coloneqq \frac{\bbP(V{=}v \mid A)}{\bbP(V{=}v \mid \neg A)}. \] Then $\bbP(B\mid A) > \bbP(B\mid \neg A)$ is equivalent to $\Lambda(1) > \Lambda(0)$, since $\Lambda(1) = \frac{p/(p+q)}{r/(r+s)}$ and $\Lambda(0) = \frac{q/(p+q)}{s/(r+s)}$, so $\Lambda(1)/\Lambda(0) = ps/qr$ is the odds ratio. By Bayes' rule in odds form, $\Lambda(1) > \Lambda(0)$ means that observing $Y{=}1$ multiplies the odds of $A$ by a larger factor than observing $Y{=}0$ does, which is $\bbP(A\mid B) > \bbP(A\mid \neg B)$. This Neyman--Pearson-flavored argument \cite[Ch.~1]{Karlin} is technically similar to the Bayes factor proof; however, it is conceptually distinct because it frames the result as a monotone likelihood ratio property of the family $\{\bbP(\cdot \mid A),\, \bbP(\cdot \mid \neg A)\}$.

\subsection{Logistic regression}

In a $2\times 2$ table, the saturated logistic regression of $\mathbf{1}_A$ on $\mathbf{1}_B$ gives \[ \logit \bbP\parent{X{=}1 \mid Y{=}b} = \hat\beta_0 + \hat\beta_1\, b, \qquad b\in\{0,1\}, \] with $\hat\beta_1 = \log(ps/qr)$ \cite[on $2\times2$ tables, odds ratios, and logistic regression]{Agresti}. Fitting the reversed regression of $\mathbf{1}_B$ on $\mathbf{1}_A$ gives the same coefficient, since $\hat\beta_1$ is the log-odds ratio in both cases. Hence \[ \bbP(A\mid B) > \bbP(A\mid \neg B) \iff \hat\beta_1 > 0 \iff \bbP(B\mid A) > \bbP(B\mid \neg A). \] This is the odds-ratio proof dressed in the language of generalized linear models.

%% ====================================================================
\section{The Information-Theoretic Proof}\label{sec:pmi}
%% ====================================================================

\subsection{Perspective}

This section recasts \Cref{thmmain} in information-theoretic terms. The key point is that positive evidential support is equivalent to positive pointwise mutual information, whose symmetry in $A$ and $B$ makes the reciprocity of \Cref{thmmain} immediate.

\subsection{Proof}

\begin{definition}[Pointwise mutual information] The \defi{pointwise mutual information} (PMI) of the outcome pair $(A,B)$ is \cite[Ch.~2, esp.\ §2.3]{CoverThomas} \[ \pmi\parent{A;B} \coloneqq \log \frac{\bbP(A \wedge B)}{\bbP(A)\,\bbP(B)} = \log \frac{\bbP(A\mid B)}{\bbP(A)} = \log \frac{\bbP(B\mid A)}{\bbP(B)}. \] The three expressions are equal by Bayes' rule, and the first makes manifest the symmetry $\pmi\parent{A;B} = \pmi\parent{B;A}$. \end{definition}

\begin{lemma}\label{lemmapmi} We have $\bbP(A\mid B) > \bbP(A\mid \neg B)$ if and only if $\pmi\parent{A;B} > 0$. \end{lemma}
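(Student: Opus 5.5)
The plan is to chain three equivalences through quantities already computed: $\pmi(A;B)>0$ iff $\bbP(A\mid B)>\bbP(A)$ iff $\Cov(X,Y)>0$ iff $\bbP(A\mid B)>\bbP(A\mid\neg B)$. Everything is controlled by the covariance, and no new computation should be needed.

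\textbf{Step 1.} Since $\log$ is strictly increasing on $(0,\infty)$ and $\log 1=0$, the definition
\[
\pmi(A;B)=\log\frac{\bbP(A\mid B)}{\bbP(A)}
\]
gives $\pmi(A;B)>0$ if and only if $\bbP(A\mid B)/\bbP(A)>1$. Because $\bbP(A)>0$, this is equivalent to $\bbP(A\mid B)>\bbP(A)$.

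\textbf{Step 2.} By \Cref{propcovcond}, $\bbP(A\mid B)-\bbP(A)=\Cov(X,Y)/\bbP(B)$ with $\bbP(B)>0$. Hence $\bbP(A\mid B)>\bbP(A)$ exactly when $\Cov(X,Y)>0$, and by \Cref{propcoveq} this is the same as $\Delta>0$.

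\textbf{Step 3.} The covariance proof of \Cref{thmmain}, or equally the identity
\[
\bbP(A\mid B)-\bbP(A\mid\neg B)=\Cov(X,Y)\left(\frac{1}{\bbP(B)}+\frac{1}{\bbP(\neg B)}\right),
\]
shows that $\bbP(A\mid B)>\bbP(A\mid\neg B)$ if and only if $\Cov(X,Y)>0$. Chaining Steps 1--3 gives the lemma.

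\textbf{Alternative route.} One could skip the covariance and use the law of total probability directly: $\bbP(A)$ is the convex combination $\bbP(B)\,\bbP(A\mid B)+\bbP(\neg B)\,\bbP(A\mid\neg B)$ with both weights strictly positive. A strict convex combination lies strictly below its larger endpoint, so $\bbP(A\mid B)>\bbP(A)$ holds exactly when $\bbP(A\mid B)$ is the larger of the two conditional probabilities.

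\textbf{Main obstacle.} The one delicate point is that the logarithm must be defined. If $\bbP(A\wedge B)=0$, then $\pmi(A;B)=-\infty$. The lemma still holds under the convention $\log 0=-\infty$, because in that case both sides are false: $\bbP(A\mid B)=0<\bbP(A\mid\neg B)$, the latter being positive since $\bbP(A)>0$. I would handle this degenerate case with a brief remark and then run the main chain assuming $p>0$.
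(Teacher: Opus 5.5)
Your proof is correct, but your main route is not the paper's. The paper's proof is your ``alternative route'': by the law of total probability, $\bbP(A)$ is a strict convex combination of $\bbP(A\mid B)$ and $\bbP(A\mid\neg B)$. It therefore lies strictly between them unless they are equal, which gives $\bbP(A\mid B)>\bbP(A\mid\neg B)\iff\bbP(A\mid B)>\bbP(A)$, and monotonicity of $\log$ finishes. (Your phrasing ``lies strictly below its larger endpoint'' needs the endpoints to differ. When they are equal both sides of the equivalence are false, so nothing breaks.)

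Your main chain goes instead through $\Cov(X,Y)$, via \Cref{propcovcond} and the covariance identity. It is valid and short given the earlier sections, but it undercuts what the paper wants from this lemma. Step~3 invokes the covariance proof of \Cref{thmmain} itself. An information-theoretic proof built on your version of the lemma would therefore just be the covariance proof with a logarithm attached. The paper explicitly advertises this proof as one that never mentions $\Delta$, with the symmetry of $\pmi$ as the only source of reciprocity.

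The convex-combination argument is self-contained, needs no computation, and preserves that independence, so you should lead with it. Your remark on the degenerate case $\bbP(A\wedge B)=0$ is correct and a bit more careful than the paper. The paper tacitly relies on the convention $\log 0=-\infty$ when it says ``$\log$ is strictly increasing.''
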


\begin{proof} By the law of total probability, we compute \[ \bbP(A) = \bbP(A\mid B)\,\bbP(B) + \bbP(A\mid \neg B)\,\bbP(\neg B). \] Since $\bbP(B)$ and $\bbP(\neg B)$ are strictly positive and sum to~$1$, the marginal $\bbP(A)$ is a strict convex combination of $\bbP(A\mid B)$ and $\bbP(A\mid \neg B)$. It therefore lies strictly between them unless they are equal. Hence \[ \bbP(A\mid B) > \bbP(A\mid \neg B) \iff \bbP(A\mid B) > \bbP(A). \] Since $\log$ is strictly increasing, $\bbP(A\mid B) > \bbP(A) \iff \pmi\parent{A;B} > 0$. \end{proof}

\begin{proof}[Proof of \Cref{thmmain}, information-theoretic version] By \Cref{lemmapmi}, we have \[ \bbP(A\mid B) > \bbP(A\mid \neg B) \iff \pmi\parent{A;B} > 0. \] By symmetry, \[ \pmi\parent{A;B} > 0 \iff \pmi\parent{B;A} > 0. \] By \Cref{lemmapmi} with $A$ and $B$ swapped, \[ \pmi\parent{B;A} > 0 \iff \bbP(B\mid A) > \bbP(B\mid \neg A). \] \end{proof}

\begin{remark}[Carnap's probability ratio] This proof is closely related to what the Bayesian confirmation literature calls the \defi{probability ratio rule} \cite{EellsFitelson,CrupiTentoriGonzalez}, namely $\bbP(A\mid B)/\bbP(A) = \bbP(B\mid A)/\bbP(B)$. This is pointwise mutual information with the logarithm removed. We do not treat it as a separate proof, but it deserves mention as the historical antecedent. \end{remark}

\subsection{Surprisal reduction}

The \defi{surprisal} of $A$ is $-\log \bbP(A)$; after learning $B$ it becomes $-\log \bbP(A\mid B)$. The reduction in surprisal is \[ \log \bbP(A\mid B) - \log \bbP(A) = \pmi\parent{A;B}. \] A positive value means $A$ became less surprising after learning $B$. By symmetry of PMI, the same quantity measures how much $A$ reduces the surprisal of $B$. The reciprocity of evidence is the reciprocity of surprisal reduction.

\subsection{Mutual information and the direction of dependence}

The ordinary mutual information
\[
I(A;B)\coloneqq \bbE[\pmi\parent{A;B}]
\]
is always nonnegative \cite[Ch.~2]{CoverThomas}, and it vanishes if and only if $A$ and $B$ are independent. It measures the amount of dependence, but not its direction: positively associated and negatively associated binary pairs can have the same mutual information.

The direction of association is instead visible at the level of pointwise mutual information for the four atomic outcome pairs. In our notation,
\[
\pmi(A;B)=\log\frac{p}{(p+q)(p+r)},\qquad
\pmi(\neg A;\neg B)=\log\frac{s}{(r+s)(q+s)},
\]
while
\[
\pmi(A;\neg B)=\log\frac{q}{(p+q)(q+s)},\qquad
\pmi(\neg A;B)=\log\frac{r}{(r+s)(p+r)}.
\]
A direct calculation shows that
\[
\sgn\pmi(A;B)=\sgn\pmi(\neg A;\neg B)=\sgn\Delta
\]
and
\[
\sgn\pmi(A;\neg B)=\sgn\pmi(\neg A;B)=-\sgn\Delta,
\]
where $\Delta=ps-qr$ as usual. Thus, for binary variables, the two concordant atoms $A\wedge B$ and $\neg A\wedge\neg B$ share one pointwise mutual information sign, the two discordant atoms $A\wedge\neg B$ and $\neg A\wedge B$ share the opposite sign, and the same scalar $\Delta$ governs them all.

%% ====================================================================
\section{The Combinatorial Proof}\label{sec:comb}
%% ====================================================================

\subsection{Perspective}

This section gives an elementary counting proof of \Cref{thmmain}. Instead of manipulating probabilities or transforms of probabilities, it compares favorable and unfavorable ordered pairs and shows that the same product inequality appears in both directions.

\subsection{Proof}

Assume the probability space is finite and uniform, with subsets \[ \mathcal{P} \coloneqq A\cap B,\quad \mathcal{Q} \coloneqq A\cap \neg B,\quad \mathcal{R} \coloneqq \neg A\cap B,\quad \mathcal{S} \coloneqq \neg A\cap \neg B, \] having cardinalities $|\mathcal{P}|=n_p$, $|\mathcal{Q}|=n_q$, $|\mathcal{R}|=n_r$, $|\mathcal{S}|=n_s$.

\begin{proof}[Proof of \Cref{thmmain}, combinatorial version] To compare $\bbP(A\mid B)$ and $\bbP(A\mid \neg B)$, consider ordered pairs $(u,v)$ with $u\in B$ and $v\in \neg B$. There are $(n_p + n_r)(n_q + n_s)$ such pairs. Call a pair \defi{$A$-favorable} if $u\in A$ and $v\notin A$, and \defi{$A$-unfavorable} if $u\notin A$ and $v\in A$. The $A$-favorable pairs are exactly those with $u\in \mathcal{P}$ and $v\in \mathcal{S}$, so there are $n_p n_s$ of them. The $A$-unfavorable pairs have $u\in \mathcal{R}$ and $v\in \mathcal{Q}$, so there are $n_r n_q$ of them. Therefore \[ \bbP(A\mid B) > \bbP(A\mid \neg B) \iff n_p n_s > n_r n_q. \]

Now swap the roles of $A$ and $B$. To compare $\bbP(B\mid A)$ and $\bbP(B\mid \neg A)$, consider ordered pairs $(u,v)$ with $u\in A$ and $v\in \neg A$. The $B$-favorable pairs have $u\in \mathcal{P}$ and $v\in \mathcal{S}$, again counted by $n_p n_s$. The $B$-unfavorable pairs have $u\in \mathcal{Q}$ and $v\in \mathcal{R}$, counted by $n_q n_r$. Hence \[ \bbP(B\mid A) > \bbP(B\mid \neg A) \iff n_p n_s > n_q n_r. \] The two inequalities are identical. \end{proof}

\subsection{Extension to arbitrary probability spaces}

The combinatorial proof works directly for finite uniform spaces. For rational probabilities, clear denominators to reduce to the finite uniform case. For real-valued tables with all entries strictly positive, the two conditional-probability differences \begin{align*} F(p,q,r,s) &\coloneqq \bbP(A\mid B) - \bbP(A\mid \neg B) \ \text{and} \\ G(p,q,r,s) &\coloneqq \bbP(B\mid A) - \bbP(B\mid \neg A) \end{align*} are continuous functions on the open simplex $\{(p,q,r,s)\in(0,1)^4 \mid p+q+r+s=1\}$. Since $\sgn F = \sgn G$ on the dense subset of rational tables, and both functions are continuous, the equality of signs extends to the entire simplex.

\subsection{The random-matchup interpretation}

The combinatorial proof has a probabilistic reformulation. Sample independently one point $u$ uniformly from $B$ and one point $v$ uniformly from $\neg B$. Then $\bbP(u\in A) = \bbP(A\mid B)$ and $\bbP(v\in A) = \bbP(A\mid \neg B)$, so $\bbP(A\mid B) > \bbP(A\mid \neg B)$ says the $B$-point is more likely to satisfy $A$ than the $\neg B$-point. The favorable-pair count $n_p n_s$ and the unfavorable-pair count $n_r n_q$ are rescalings of the joint probabilities of the events ``$u$ satisfies $A$ and $v$ does not'' and ``$u$ does not satisfy $A$ and $v$ does''---a natural tournament-style comparison.

%% ====================================================================
\section{The Log-Potential Proof}\label{sec:logpot}
%% ====================================================================

\subsection{Perspective}

This section recasts \Cref{thmmain} in terms of a discrete potential on $\{0,1\}^2$. The key point is that both conditional inequalities are equivalent to positivity of the same mixed discrete derivative of $\log P$, so the symmetry comes from the commutativity of discrete differentiation.

\subsection{Proof}

Write $g(a,b) \coloneqq \log \bbP(A{=}a,\, B{=}b)$ for $(a,b)\in\{0,1\}^2$, so \[ g(1,1) = \log p,\quad g(1,0) = \log q,\quad g(0,1) = \log r,\quad g(0,0) = \log s. \]

Define the \defi{horizontal log-odds field} \[ H(b) \coloneqq g(1,b) - g(0,b) = \log \frac{\bbP(A{=}1 \mid B{=}b)}{\bbP(A{=}0 \mid B{=}b)}, \] i.e., $H(b)$ is the log-odds of $A$ at level $b$ of $B$. Since $\logit : \pi \mapsto \log\parent{\frac{\pi}{1 - \pi}}$ is strictly increasing, \[ \bbP(A\mid B) > \bbP(A\mid \neg B) \iff H(1) > H(0). \]

Define the \defi{vertical log-odds field} \[ V(a) \coloneqq g(a,1) - g(a,0) = \log \frac{\bbP(B{=}1 \mid A{=}a)}{\bbP(B{=}0 \mid A{=}a)}. \] Then \[ \bbP(B\mid A) > \bbP(B\mid \neg A) \iff V(1) > V(0). \]

\begin{proof}[Proof of \Cref{thmmain}, log-potential version] The mixed discrete derivative of $g$ is \begin{align*} H(1) - H(0) &= \bigl[g(1,1) - g(0,1)\bigr] - \bigl[g(1,0) - g(0,0)\bigr] \\ &= g(1,1) + g(0,0) - g(1,0) - g(0,1). \end{align*} The same quantity arises from $V$, \begin{align*} V(1) - V(0) &= \bigl[g(1,1) - g(1,0)\bigr] - \bigl[g(0,1) - g(0,0)\bigr] \\ &= g(1,1) + g(0,0) - g(1,0) - g(0,1). \end{align*} The mixed discrete derivative $\partial_{AB} g \coloneqq g(1,1) + g(0,0) - g(1,0) - g(0,1)$ does not depend on the order in which the differences are taken. (This is the discrete analogue of $\partial_a \partial_b = \partial_b \partial_a$.) Since the logit function is strictly increasing, $H(1) > H(0)$ and $V(1) > V(0)$ are both equivalent to $\partial_{AB} g > 0$. \end{proof}

Note that $\partial_{AB} g = \log(ps/qr)$, so $\partial_{AB} g > 0$ if and only if $\OR > 1$ if and only if $\Delta > 0$. But the proof's engine is different from the odds-ratio proof---the invariant appears not as a ratio but as a mixed increment of a potential, and the symmetry comes from the commutativity of discrete differentiation rather than from any algebraic manipulation of odds.

\subsection{Total positivity and the FKG inequality}

The condition $\partial_{AB} g \ge 0$---equivalently $ps \ge qr$---is the $2\times 2$ case of \defi{total positivity of order~2} (TP${}_2$) \cite[Ch.~1]{Karlin}, or equivalently \defi{log-supermodularity}. A nonnegative matrix is TP${}_2$ if every $2\times 2$ minor is nonnegative. For a $2\times 2$ table there is only one minor, so TP${}_2$ reduces to $ps \ge qr$.

Log-supermodularity has remarkable consequences. The celebrated \defi{FKG inequality} \cite{FKG} states that if the joint law on a finite distributive lattice is log-supermodular, then any two increasing events are positively correlated. In the $2\times 2$ case, this is exactly \Cref{thmmain}. Total positivity and log-supermodularity are both invariant under transposition of the matrix, so the symmetry of \Cref{thmmain} is a special case of a general phenomenon.

\begin{remark} In the continuous setting, the analogue of log-supermodularity for a joint density $f(a,b)$ is $\partial^2 \log f / \partial a\,\partial b \ge 0$. The mixed discrete derivative $\partial_{AB} g$ is the discrete version of this condition. \end{remark}

\subsection{Stochastic order}

View the two conditional laws $\mu_A \coloneqq \mathcal{L}(B \mid A)$ and $\mu_{\neg A} \coloneqq \mathcal{L}(B \mid \neg A)$ as probability measures on the ordered two-point space $\{0,1\}$. Then $\bbP(B\mid A) > \bbP(B\mid \neg A)$ is exactly the statement $\mu_A \succ_{\text{st}} \mu_{\neg A}$ in the usual stochastic order \cite[Ch.~1]{ShakedShanthikumar}. Likewise, $\bbP(A\mid B) > \bbP(A\mid \neg B)$ says the analogous stochastic domination holds after swapping coordinates. In a $2\times 2$ table, the signed difference of the two conditional laws has only one degree of freedom, so row-wise and column-wise stochastic order are equivalent. For $m\times n$ tables with $\min(m,n) \ge 3$, these notions can diverge, which is why the simple symmetry fails in higher dimensions.

\subsection{The log-linear model}

The mixed increment $\partial_{AB} g = \log(ps/qr)$ is exactly the interaction term in the saturated log-linear model for a $2\times 2$ table. In log-linear form, the joint mass function may be written as \[ \log \bbP(A{=}a,\,B{=}b) = \mu + \lambda_A^{(a)} + \lambda_B^{(b)} + \lambda_{AB}^{(a,b)}, \] subject to the usual identifiability convention that the interaction terms sum to zero over each index. In the $2\times 2$ case, the interaction has one free parameter, and under the standard effect-coding convention that parameter is $\tfrac14\log(ps/qr)$. Thus the log-potential proof identifies \Cref{thmmain} with positivity of the unique interaction parameter in the saturated log-linear model \cite{Agresti}.

%% ====================================================================
\section{Conclusion}\label{sec:conclusion}
%% ====================================================================

We have given seven different proofs of \Cref{thmmain}.

\begin{enumerate} \item The \textbf{algebraic proof} cross-multiplies conditional-probability fractions. \item The \textbf{covariance proof} anchors each conditional probability to the marginal via $\bbP(A\mid B) - \bbP(A) = \Cov/\bbP(B)$ and appeals to $\Cov\parent{X,Y} = \Cov\parent{Y,X}$. \item The \textbf{coupling-parameter proof} expresses both conditional-probability differences as $t/(\text{positive constant})$, where $t$ is the unique free parameter once marginals are fixed. \item The \textbf{odds-ratio proof} passes to odds, reducing to $\OR > 1$. \item The \textbf{information-theoretic proof} uses the definitional symmetry $\pmi\parent{A;B} = \pmi\parent{B;A}$ and the law of total probability, never mentioning $\Delta$. \item The \textbf{combinatorial proof} counts favorable and unfavorable ordered pairs under two different decompositions. \item The \textbf{log-potential proof} identifies both inequalities with the positivity of the mixed discrete derivative of $\log P$, relying on the commutativity of discrete differentiation. \end{enumerate}

These proofs come from meaningfully different viewpoints. The algebraic, covariance, and odds-ratio proofs arrive at the scalar $\Delta = ps - qr$ (or its monotone transforms) by different routes---direct fraction manipulation, anchoring to the marginal, and passage to odds. The coupling-parameter proof works instead with $t = p - \bbP(A)\bbP(B)$, the unique interaction coordinate once marginals are fixed. The information-theoretic proof avoids $\Delta$ entirely. The combinatorial proof counts objects. The log-potential proof works on a different scale---$\log P$ rather than $P$---and derives the symmetry from the commutativity of a discrete operator rather than from an algebraic manipulation of fractions.

Around these proofs we developed a web of perspectives. The following table summarizes our main quantities of interest and what their positivity signifies.

\bigskip

\begin{center}
\renewcommand{\arraystretch}{1.3}
\begin{tabular}{lll}
\hline
\textbf{Perspective} & \textbf{Quantity} & \textbf{Positive means\ldots} \\
\hline
Linear algebra & $\det M = \Delta$ & rank exceeds $1$ \\
Geometry & signed area of row vectors & favorable tilt \\
Probability & $\Cov(\mathbf{1}_A,\mathbf{1}_B)$ & positive correlation \\
Regression & slope $\beta_{X\mid Y}$ & increasing conditional mean \\
Walsh--Fourier & interaction $\hat{f}_{UV}$ & positive interaction \\
Hilbert space & canonical correlation $\rho$ & aligned projections \\
Coupling & $t = p - \bbP(A)\bbP(B)$ & above independence \\
Concordance & $\bbP(\text{conc.}) > \bbP(\text{disc.})$ & aligned pairs favored \\
Transport & $\bbP(X\neq Y)$ & less mismatch than indep. \\
Statistics & $\log(ps/qr)$ & odds ratio $> 1$ \\
Bayesian & Bayes factor $> 1$ & $B$ supports $A$ \\
Ising model & interaction $\gamma$ & ferromagnetic coupling \\
MLR & $\Lambda(1)/\Lambda(0)$ & monotone likelihood ratio \\
Logistic & $\hat\beta_1 > 0$ & positive logistic slope \\
Information & $\pmi\parent{A;B}$ & positive local evidence \\
Combinatorial & $n_pn_s - n_qn_r$ & favorable ordered pairs dominate \\
Log-potential & $\partial_{AB}\log P$ & positive mixed increment \\
Total positivity & TP${}_2$ minor & positive association \\
Stochastic order & $\mu_A \succ_{\text{st}} \mu_{\neg A}$ & row-wise dominance \\
Log-linear & $\lambda_{AB} > 0$ & positive interaction term \\
\hline
\end{tabular}
\end{center}

\bigskip

All scalar association measures collapse to one parameter in the $2\times 2$ case, because the space of possible tables is a one-parameter family.

\begin{proposition} The four entries $p,q,r,s$ satisfy three independent linear constraints ($p+q = \bbP(A)$, $p+r = \bbP(B)$, $p+q+r+s = 1$), so $p$ is the only free parameter. Every signed measure of binary association that depends only on the joint law is therefore a function of $p$, or equivalently of $\Delta$, or equivalently of the coupling parameter $t$. \end{proposition}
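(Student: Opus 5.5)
The plan is to treat the marginals $\alpha \coloneqq \bbP(A)$ and $\beta \coloneqq \bbP(B)$ as fixed data and show that the map $(p,q,r,s)\mapsto p$ is injective on the set of joint laws with those marginals. First, I will check that the three constraints
\[
p+q=\alpha,\qquad p+r=\beta,\qquad p+q+r+s=1
\]
are linearly independent. Their coefficient vectors $(1,1,0,0)$, $(1,0,1,0)$, $(1,1,1,1)$ are independent because the $3\times 3$ minor on the columns $q,r,s$ is nonsingular (it is triangular). The solution set in $\bbR^4$ is therefore an affine line. Since $p$ is the only coordinate left unconstrained, I will parametrize it by $p$ and solve explicitly:
\[
q=\alpha-p,\qquad r=\beta-p,\qquad s=1-\alpha-\beta+p.
\]
These are exactly the formulas of \Cref{sec:coupling}. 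Intersecting with the nonnegative orthant gives the admissible interval for $p$, which that section already recorded in terms of $t$.

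Second, I will show that $p$, $\Delta$ and $t$ are interchangeable coordinates on this line. By definition $t=p-\alpha\beta$. By \Cref{propcoveq}, $\Delta=\Cov(X,Y)=p-\alpha\beta$ as well, so in fact $\Delta=t$. Each of them is therefore a strictly increasing affine function of $p$ once the marginals are fixed, and any one of them determines the full table $(p,q,r,s)$.

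Third, I will obtain the ``therefore'' clause by composing maps. A measure of association that depends only on the joint law is a function $\Phi(p,q,r,s)$. Substituting the parametrization gives $\Phi = \phi_{\alpha,\beta}(p)$ for a function $\phi_{\alpha,\beta}$ of one variable. Because $p\mapsto t$ is a bijection on the admissible interval, I can then rewrite this as a function of $t$, and hence of $\Delta$. I will also illustrate the point with the explicit formulas already derived, namely $F(t)=t/(\beta(1-\beta))$ and $G(t)=t/(\alpha(1-\alpha))$.

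The main obstacle is interpretive rather than computational: the phrase ``a function of $p$'' only makes sense with $\bbP(A)$ and $\bbP(B)$ held fixed. I will state this explicitly, writing the dependence on $\alpha$ and $\beta$ as parameters. This matters because without fixing the marginals, a measure such as the odds ratio is not a function of $\Delta$ alone. I would also remark that the proposition asserts functional dependence, not a shared sign. Sign agreement holds only for measures that vanish at independence and are monotone in $t$, which is what \Cref{thmmain} and the earlier proofs verify case by case.
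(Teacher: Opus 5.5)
Your proposal is correct and takes essentially the same route as the paper, which rests on the parameter count plus the explicit parametrization $q=\alpha-p$, $r=\beta-p$, $s=1-\alpha-\beta+p$ from \Cref{sec:coupling}. Your nonsingular $3\times 3$ minor on the $q,r,s$ columns is exactly what justifies taking $p$ itself as the free parameter, and your other points are accurate sharpenings of the paper's informal statement: that $\Delta = t = p-\alpha\beta$ holds identically, that the marginals must be held fixed, and that functional dependence alone does not force sign agreement.
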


This one-dimensionality explains why all nineteen perspectives in the table above agree on the sign of association: they are different coordinate systems on the same one-dimensional family. For $m\times n$ tables with $\min(m,n) \ge 3$, the interaction structure becomes multi-dimensional, and a single scalar no longer governs every reasonable notion of association; for example, the Hilbert-space canonical correlations become a vector, the odds-ratio generalizes to a matrix of local odds ratios, and row-wise versus column-wise stochastic order can disagree. The binary case is special, because a single number can tell the whole story.

\end{document}